\newcommand{\Jac}{\mathcal{J}}
\newcommand{\rank}{\mathrm{rank}}
\begin{document}

\title{{A Special Homotopy Continuation Method For A Class of Polynomial Systems}
\thanks{The work is partly supported by the projects NSFC Grants 11471307,  11290141, 11271034 and
61532019.}}

%% Group authors per affiliation:
\author{Yu Wang\inst{1} \and Wenyuan Wu\inst{2}$\thanks{Corresponding author.}$ \and Bican Xia\inst{1} }
\institute{LMAM \& School of Mathematical Sciences, Peking University \\ \email{yuxiaowang@pku.edu.cn, xbc@math.pku.edu.cn}\\ \and Chongqing Inst. of Green and Intelligent Techn.\\Chinese Academy of Sciences\\ \email{wuwenyuan@cigit.ac.cn}
}

\date{}
\maketitle
\begin{abstract}
%A fundamental problem in numerical algebraic geometry is numerically computing all the isolated solutions of square systems.
{A special homotopy continuation method, as a combination of the polyhedral homotopy and the linear product homotopy, is proposed for computing all the isolated solutions to a special class of polynomial systems. The root number bound of this method is between the total degree bound and the mixed volume bound and can be easily computed. The new algorithm has been implemented as a program called \texttt{LPH} using C++. Our experiments show its efficiency compared to the polyhedral or other homotopies on such systems. As an application, the algorithm can be used to find witness points on each connected component of a real variety.}
\end{abstract}

%\begin{keyword}
%Numerical Algebraic geometry, Polynomial system , Homotopy continuation , Real varieties

%\end{keyword}

%\linenumbers

\section{Introduction}

In many applications in science, engineering, and economics, solving systems of polynomial equations has been a subject of great importance. %Numerical Algebraic Geometry has been a blossoming area since
The homotopy continuation method was developed in 1970s \cite{Garcia1979}\cite{Drexler1977} and has been greatly expanded and developed by many reseachers %Sommese, Wampler, Verschelde and others
(see for example \cite{Sommese96numericalalgebraic}\cite{Allgower2003Introduction}\cite{Li2003209}\cite{Sommese2005The}\cite{Morgan:2009:SPS:1717962}). Nowadays, homotopy continuation method has become one of the most reliable and efficient classes of numerical methods for finding the isolated solutions to a polynomial system and the so-called {\em numerical algebraic geometry} based on homotopy continuation method has been a blossoming area. There are many famous software packages implementing different homotopy methods, including Bertini\cite{Bates:2013:NSP:2568129}, Hom4PS-2.0\cite{Lee2008}, HOMPACK\cite{Morgan:1989:FIS:63522.64124}, PHCpack\cite{Verschelde:1999:APG:317275.317286}, etc.

%We know from the classical Bez\'out Theorem that the number of isolated solution to a polynomial systems is no more than the total degree bound. In the middle of 1990s, polyhedral homotopy continuation method was developed \cite{10.2307/2153370}, which takes advantage of Bernshtein's Theorem \cite{Bernshtein1975}. Bernshtein's Theorem provides a much tighter bound, the B.K.K bound, in general for the number of isolated zeros of a polynomial systems in the algebraic tori ${\left( {{\bbbc^*}} \right)^n}$ where ${\bbbc^*} = \bbbc\backslash \{ 0\} $. In ${\bbbc^n}$ \cite{10.2307/2153718}, the B.K.K bound was slightly modified to count the number of isolated zeros in ${\bbbc^n}$, and the polyhedral continuation method was modified to obtain all zeros .
 %Li and his team developed the software Hom4PS-2.0 \cite{Lee2008}, which shows great efficiency and reliability in many problems in mathematics, science and engineering. Hom4PS-3 featured parallel computation is also available now at www.hom4ps3.org.

Classical homotopy methods compute solutions in complex spaces, while in applications, it is quite common that only real solutions have physical meaning. Computing real roots of an algebraic system is a difficult and fundamental problem in real algebraic geometry. In the field of symbolic computation, there are some famous algorithms dealing with this problem. The cylindrical algebraic decomposition algorithm \cite{cad} is the first complete algorithm which has been implemented and used successfully to solve many real problems. However, in the worst case, its complexity is of doubly exponential in the number of variables. Based on the ideas of Seidenberg \cite{10.2307/1969640} and others, some algorithms for computing at least one point on each connected component of an real algebraic set were proposed through developing the formulation of critical points and the notion of polar varieties, see \cite{ROUILLIER2000716}%\cite{ElDin2004}
\cite{SafeyElDin:2003:PVC:860854.860901}\cite{SafeyElDIn:2016:CPC:2930889.2930929} and references therein. The idea behind is studying an objective function (or map) that reaches at least one local extremum on each connected component of a real algebraic set. %, and the extreme points are finite.
%Via the infinitesimal deformation trick and considering its limit, it is reduced to the case that only studying smooth and compact real algebraic sets.
For example, the function of square of the Euclidean distance to a randomly chosen point was used in %\cite{AUBRY2002543}
\cite{Bank2004}\cite{BANK2005377}. On the other hand, some homotopy based algorithms for real solving have been proposed in \cite{Li1993Solving}\cite{Lu06findingall}\cite{Bates:2011:KCR:2134316.2134319}\cite{Besana:2013:CDA:2509858.2509861}\cite{Hauenstein2013}\cite{Shen2014}. For example, in \cite{Hauenstein2013}, a numerical homotopy method to find the extremum of Euclidean distance to a point as the objective function was presented. More recently, the Euclidean distance to a plane was proposed as a linear objective function in \cite{Wu:2013:FPR:2465506.2465954}.

%As an important aspect in homotopy method, the upper bound of number of roots of a polynomial system usually determines the number of paths tracked. We know from the classical Bez\'out Theorem that the number of isolated solution to a polynomial systems is no more than the total degree bound. In the middle of 1990s, polyhedral homotopy continuation method was developed \cite{10.2307/2153370}, which takes advantage of Bernshtein's Theorem \cite{Bernshtein1975}. Bernshtein's Theorem provides a much tighter bound, the B.K.K bound, in general for the number of isolated zeros of a polynomial systems in the algebraic tori ${\left( {{\bbbc^*}} \right)^n}$ where ${\bbbc^*} = \bbbc\backslash \{ 0\} $.

In this paper, we follow the work of \cite{Wu:2013:FPR:2465506.2465954}\cite{Wu2017} to extend complex homotopy methods to finding witness points on the irreducible components of real varieties. To obtain such witness points, we first need to solve a special class of polynomial systems. Combining the polyhedral homotopy and the linear product homotopy, we give a special homotopy method for solving the system of that type. The root number bound of this method is not only easy to compute but also much smaller than the total degree bound and close to the BKK bound \cite{Bernshtein1975} when the polynomials defining the algebraic set is not very sparse. This key observation enables us to design an efficient homotopy procedure to obtain critical points numerically. The ideas and algorithms we proposed in this article avoid a great number of divergent paths to track compared with the total degree homotopy and save the great time cost for mixed volume computation compared with the polyhedral homotopy. The new algorithm has been implemented as a program called \texttt{LPH} using C++. Our experiments show its efficiency compared to the polyhedral or other homotopies on such systems.

The rest of this paper is organized as follows. Section 2 describes some preliminary concepts and results. Section 3 introduces a special type of polynomial systems we are considering. The new homotopy for these polynomial systems is also presented. It naturally leads to an algorithm which is described in Section 4. Based on this algorithm, in Section 5, we present a method to find real witness points of positive dimensional varieties, together with an illustrative example. The experimental performance of the software package \texttt{LPH}, which is an implementation of the method in C++, is given in Section 6. %, and we also compare it with other general homotopy software.

\section{Preliminary}

\subsection{Algebraic Sets and Genericity}

For a polynomial system $f:{\bbbc^n} \to {\bbbc^k}$, let $V(f) = \{ x \in {\bbbc^n}|f(x) = 0\} $ and ${V_\bbbr}(f) = V(f) \cap {\bbbr^n} = \{ x \in {\bbbr^n}|f(x) = 0\}$ be the set of complex solutions and the set of real solutions of $f(x) = 0$, respectively.  A set $X \subset {\bbbc^n}$ is called an algebraic set if $X = V(g)$, for some polynomial system $g$.

An algebraic set $X$ is irreducible if there does not exist a decomposition ${X_1} \cup {X_2} = X$ with ${X_1},{X_2} \ne X$ of $X$ as a union of two strict algebraic subsets. An algebraic set is reducible, if there exist such decomposition. For example, the algebraic set $V(xy) \subset {\bbbc^2}$ is consisting of the two coordinate axes, and is obviously the union of $V(x)$ and $V(y)$, hence reducible.

For an irreducible algebraic set $X$, the subset of smooth (or manifold) points ${X_{reg}}$ is dense, open and path connected (up to the Zariski topology) in $X$. The dimension of an irreducible algebraic set $X$ is the dimension of ${X_{reg}}$ as a complex manifold.

Let $\Jac_f(x)$ denote the $n \times k$ Jacobian matrix of $f$ evaluated at $x$. By the Implicit Function Theorem, for an irreducible algebraic set $X$ defined by a reduced system $f$, $x \in {X_{reg}} \Leftrightarrow \rank(\Jac_f(x)) = n - \dim X$. When $n = k$, the system $f$ is said to be a square system. In this case, a point $x \in V(f)$ is nonsingular if $\det (\Jac_f(x)) \ne 0$, and singular otherwise.

On irreducible algebraic set, we can define the notion of genericity, adapted from\cite{Sommese2005The}.

\begin{definition}
Let $X$ be an irreducible algebraic set. Property {\tt P} holds generically on $X$, if the set of points in $X$ that do not satisfy property {\tt P} are contained in a proper algebraic subset $Y$ of $X$. The points in $Y$ are called {\em nongeneric points}, and their complements $X\backslash Y$ are called {\em generic points}.
\end{definition}

\begin{remark} From the definition, one sees that the notion of generic is only meaningful in the context of property {\tt P} in question.
\end{remark}

Every algebraic set $X$ has a (uniquely up to reordering) expression $X = {X_1} \cup  \ldots \cup {X_r}$ with ${X_i}$ irreducible and ${X_i} \not\subset {X_j}$ for $i \ne j$. And ${X_i}$ are the irreducible components of $X$. The dimension of an algebraic set is defined to be the maximum dimension of its irreducible components. An algebraic set is said to be pure-dimensional if each of its components has the same dimension.

\subsection{Trackable Paths} In homotopy continuation methods, the notion of path tracking is fundamental, the following definition of trackable solution path is adapted from \cite{MR2728983}.

\begin{definition} Let $H(x,t):{\bbbc^n} \times \bbbc \to {\bbbc^n}$ be polynomial in $x$ and complex analytic in $t$, and let ${x^*}$ be nonsingular isolated solution of $H(x,0) = 0$, we say ${x^*}$ is {\em trackable} for $t \in [0,1)$ from 0 to 1 using $H(x,t)$ if there is a smooth map ${\xi _{{x^*}}}:[0,1) \to {\bbbc^n}$ such that ${\xi _{{x^*}}}(0) = {x^*}$, and for $t \in [0,1)$, ${\xi _{{x^*}}}(t)$ is a nonsingular isolated solution of $H(x,t) = 0$. The solution path started at ${x^*}$is said to be {\em convergent} if $\mathop {\lim }\limits_{t \to 1} {\xi _{{x^*}}}(t) \in {\bbbc^n}$, and the limit is called the {\em endpoint} of the path.
\end{definition}

\subsection{Witness Set and Degree of an Algebraic Set}

Let $X \subset {\bbbc^n}$ be a pure $i$-dimensional algebraic set, given a generic co-dimension $i$ affine linear subspace $L \subset {\bbbc^n}$, then $W = L \cap X$ consists of a well-defined number $d$ of points lying in ${X_{reg}}$. The number $d$ is called the degree of $X$ and denoted by $\deg (X)$. We refer to $W$ as a set of witness points of $X$, and call $L$ the associated $(n - i)$-slicing plane, or slicing plane for short \cite{Sommese2005The}.

It will be convenient to use the notations adapted from (\cite{Sommese2005The},Chapter 8), when we prove the theorems in Section 3.

\begin{enumerate}
\item Let $\left\langle {{e_1}, \ldots, {e_n}} \right\rangle $ be the $n$ dimensional vector space having basis elements ${e_1}, \ldots , {e_n}$ with complex coefficients. That is, a point in this space may be written as $\sum\limits_{i = 1}^n {{c_i}{e_i}} $, with ${c_i} \in \bbbc$ for $i = 1, \ldots, n$. Note that we have not specified anything about the basis elements, it could be individual variables, monomials, or polynomials.
\item Let $\left\{ {{p_1}, \ldots, {p_n}} \right\} \otimes \left\{ {{q_1}, \ldots, {q_m}} \right\}$ be the product of two sets, that is, the set $\left\{ {{p_i} \cdot {q_j}|i = 1, \ldots, n;j = 1, \ldots, m} \right\}$. In Section 3 we take this product as the image inside the ring of polynomials; that is, $x \otimes y = xy$ is just the product of two polynomials.
\item Define $P \times Q = \{ pq|p \in P,q \in Q\} $. Accordingly, we have $\left\langle P \right\rangle  \times \left\langle Q \right\rangle  \subset \left\langle {P \otimes Q} \right\rangle $.
\item For repeated products, we use the shorthand notations ${P^{(2)}} = P \otimes P$,${\left\langle P \right\rangle ^{(2)}} = \left\langle P \right\rangle  \times \left\langle P \right\rangle $, and similar for three or more products.
\item For a square polynomial system $P$, we denote by $MV(P)$ the mixed volume of the system $P$.
\end{enumerate}

\subsection{Critical Points}

Let $X \subset {\bbbc^n}$ be an algebraic set defined by a reduced polynomial system $f = \{ {f_1}, \ldots, {f_k}\} $, and objective function $\Phi $ is polynomial function restricted to $X$.

\begin{definition}  A point $x \in X$ is a {\em critical point} of $\Phi $ if and only if $x \in {X_{reg}}$ and $\rank[\nabla \Phi {(x)^T},\Jac_f(x)] = \rank[\nabla \Phi {(x)^T},\nabla f_1^T,...,\nabla f_k^T] \leqslant k$, where $\nabla \Phi (x)$ is the gradient vector of $\Phi $ evaluated at $x$.
\end{definition}

Let $Y$ denote the zero dimensional critical sets of $\Phi $. One way to compute the critical points is to introduce auxiliary unknowns and consider a zero dimensional variety $\hat Y$ and then project $\hat Y$ onto $Y$. We use Lagrange Multipliers to define a squared system as follows
\begin{equation}\label{eq:1}
F(x,\lambda ): = \left[ {\begin{array}{*{20}{c}}
  f \\
  {{\lambda _0}\nabla \Phi {{(x)}^T} + {\lambda _1}\nabla f_1^T + ... + {\lambda _k}\nabla f_k^T}
\end{array}} \right]
\end{equation}
Note that if ${x^*} \in X$ is a critical point of $\Phi $, then there exist ${\lambda ^*} \in {\bbbp^k}$, such that $F({x^*},{\lambda ^*}) = 0$ by the Fritz John condition\cite{John2014Extremum}. In the affine patch where ${\lambda _0} = 1$,  the system $F$ becomes a square system, and its solution $({x^*},{\lambda ^*})$ projects to critical point ${x^*}$. We will use system (\ref{eq:1}) in Section 5 with an objective function $\Phi$ defined by a linear function, and consider the affine patch where ${\lambda _0} = 1$, to find at least one point on each component of ${V_\bbbr}(f)$.

\section{Main Idea}

In this section, we give a description of our idea. First we introduce a family of polynomial equations that we will be considering.

We consider the following class of polynomial systems:
\begin{equation}\label{eq:2}
F(x,\lambda ) = \left\{ \begin{gathered}
  f \hfill \\
  J \cdot \lambda  - \beta \hfill \\
\end{gathered}  \right.
\end{equation}
where
\begin{enumerate}
\item $f = \left\{ {{f_1}, \ldots, {f_k}} \right\}$ are polynomials in $\bbbc\left[ {{x_1}, \ldots, {x_n}} \right]$, and $V({f_1}, \ldots, {f_k})$ is a pure $n-k$ dimension algebraic set in ${\bbbc^n}$.
\item $J = \left( {\begin{array}{*{20}{c}}
  {{g_{11}}}& \cdots &{{g_{1k}}} \\
   \vdots & \ddots & \vdots  \\
  {{g_{n1}}}& \cdots &{{g_{nk}}}
\end{array}} \right)$ and ${g_{ij}} (1 \leqslant i \leqslant n,1 \leqslant j \leqslant k)$ are polynomials in $\bbbc\left[ {{x_1}, \ldots, {x_n}} \right]$ with $\mathop {\max }\limits_{i,j} \deg ({g_{ij}}) = d$.
\item $\beta = {({\beta_1}, \ldots, {\beta_n})^{\rm T}}$ is a nonzero constant vector in ${\bbbc^n}$, $\lambda=({\lambda _1}, \ldots, {\lambda _k})^{\rm T}$ are unknowns, and $n > k \geqslant 1$.
\end{enumerate}
%we denote system $F(x,\lambda )$ as $F(x,\lambda ) = \left\{ {f,J \cdot \lambda  - \beta} \right\}$

%There exits several famous methods in symbolic computation, such as the Ritt-Wu method, Gr\"obner basis method, subresultant method, univariate representation method, and Geometric resolutions method (see \cite{Wu78}\cite{GBBIB529}\cite{WANG1998295}\cite{YZH92}\cite{basu:hal-01083587}\cite{Giusti1995}\cite{GIUSTI1998101}\cite{GIUSTI1997277}\cite{GIUSTI2001154}).

\begin{remark} \label{re:2}
Note that, for any invertible $n \times n$ matrix $A$, $F(x,\lambda ) = \left\{ {f,J \cdot \lambda  - \beta} \right\}$ and $F'(x,\lambda ) = \left\{ {f,A\cdot(J\cdot\lambda  - \beta)} \right\}$ have the same solutions. It's easy to know that there exists an invertible matrix $A$ such that $A \cdot \beta = {(0, \ldots, 0,1)^{\rm T}}$. So without loss of generality, we may assume that $\beta = {(0, \ldots, 0,1)^{\rm T}}$. Then, ${J\cdot\lambda  - \beta}$ has $n - 1$ equations in $\left\langle {{{\{ {x_1}, \ldots, {x_n},1\} }^d } \otimes \{ {\lambda _1}, \ldots, {\lambda _k}\} } \right\rangle $ and one equation in $\left\langle {{{\{ {x_1}, \ldots, {x_n},1\} }^d } \otimes \{ {\lambda _1}, \ldots, {\lambda _k},1\} } \right\rangle $.
\end{remark}

\begin{theorem}\label{th:1} Let $F(x,\lambda ) = \left\{ {f,J \cdot \lambda  - \beta} \right\}$ be given as in (\ref{eq:2}), $\beta = {(0, \ldots, 0,1)^{\rm T}}$, and $G = \left\{ {f,g} \right\}$ where $g = \left\{ {g_1}, \ldots, {g_n}\right\} $. ${g_i} = {l_{i1}} \cdots {l_{id }}{h_i} \in {\left\langle {{x_1}, \ldots, {x_n},1} \right\rangle ^d } \times \left\langle {{\lambda _1}, \ldots, {\lambda _k}} \right\rangle $ for $i = 1, \ldots, n - 1$ ; where ${l_{ij}}$ are linear functions in $\bbbc[{x_1}, \ldots, {x_n}]$, ${h_i}$ with randomly chosen coefficient and ${h_i}$ are homogeneous linear functions in $\bbbc[{\lambda _1}, \ldots, {\lambda _k}]$, $i = 1, \ldots, n - 1$, $j = 1, \ldots, d$ and ${g_n} = \sum\limits_{i = 1}^k {{\lambda _i}} {g_{ni}} - 1$. $H:{\bbbc^n} \times {\bbbc^k} \times \bbbc \to {\bbbc^{n + k}}$ be the homotopy defined by $H(x,\lambda ,t) = G \cdot (1 - t) + F \cdot \gamma  \cdot t$ where $\gamma $ is a randomly chosen complex number for Gamma Trick (see \cite{Sommese2005The} Chapter 7 for details). Then, generically the following items hold,
\begin{enumerate}
\item The set $S \subseteq {\bbbc^{n + k}}$ of roots of $H(x,\lambda ,0) = G(x,\lambda )$ is finite and each is a nonsingular solution of $H(x,\lambda ,0)$.
\item The number of points in $S$ is equal to the maximum number of isolated solutions of $H(x,\lambda ,0)$ as coefficients of ${l_{ij}}$, ${h_i}$, ($i = 1, \ldots, n - 1$, $j = 1, \ldots, d$) and $\gamma$ vary over $\bbbc$.
\item The solution paths defined by $H$ starting, with $t=0$, at the points in $S$ are trackable.
\end{enumerate}

\end{theorem}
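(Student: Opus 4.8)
The plan is to regard $G=\{f,g\}$ as the fixed subsystem $f$, which cuts out $X=V(f)$, augmented by a linear-product system $g$ in the sense of (\cite{Sommese2005The}, Chapter~8), and to deduce the three items from: (a) a combinatorial description of the zeros of such a system; (b) Bertini's theorem together with the definition of $\deg X$; (c) linear algebra on a block Jacobian; and (d) the coefficient--parameter continuation theorem together with the gamma trick. Throughout, ``generically'' will mean ``for $(l_{ij},h_i)$, and for item~3 also $\gamma$, outside a proper algebraic subset of the parameter space''; as in the intended applications I assume $f$ is a reduced system, so that $\rank\Jac_f(x)=k$ on a dense open subset of $X$.

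\emph{Items 1 and 2.} Every zero $(x,\lambda)$ of $G$ satisfies $f(x)=0$, has $\lambda\neq0$ (since $g_n=\sum_j\lambda_jg_{nj}-1=0$), and for each $i\le n-1$ satisfies $h_i(\lambda)=0$ or $l_{ij}(x)=0$ for some $j$ (since $g_i=l_{i1}\cdots l_{id}h_i$). Indexing the alternatives by a pair $(I,\{j_i\}_{i\notin I})$ with $I\subseteq\{1,\dots,n-1\}$, the corresponding stratum of $V(G)$ is the product of $\{f=0\}\cap\{l_{ij_i}=0:i\notin I\}$ in the $x$-coordinates with the solution set in the $\lambda$-coordinates of the linear system $\{h_i(\lambda)=0:i\in I\}\cup\{g_n(x,\lambda)=0\}$. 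Since $X$ is pure $(n-k)$-dimensional, a dimension count shows that only strata with $|I|=k-1$ can carry isolated solutions, for arbitrary coefficients: if $|I|<k-1$ the $\lambda$-part over each admissible $x$ has positive dimension, and if $|I|>k-1$ the $x$-part has positive dimension (and for generic coefficients such strata are empty). For $|I|=k-1$ and generic $l_{ij}$ the $x$-part is $X$ intersected with $n-k$ generic hyperplanes, which by Bertini's theorem and the definition of $\deg X$ is $\deg X$ reduced points of $X_{reg}$, and over each of them the residual linear system in $\lambda$ has a unique solution. Summing over the $\binom{n-1}{k-1}d^{\,n-k}$ admissible strata, and checking that generic coefficients keep all of these solutions distinct (no point lies on more of the $V(l_{ij})$ or the $V(h_i)$ than forced), gives that $S=V(G)$ is finite with $|S|=\binom{n-1}{k-1}d^{\,n-k}\deg X$. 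For arbitrary coefficients the same stratification writes $V(G)$ as a finite union of pieces; since an isolated point of a union is isolated in each piece containing it, while a point of a stratum isolated in $V(G)$ must project to an isolated point of the $x$-part (else it lies on a positive-dimensional piece) and have a $0$-dimensional, hence single-point, $\lambda$-fibre, each $|I|=k-1$ stratum contributes at most $\deg X$ isolated solutions, so the number of isolated solutions of $G$ never exceeds $\binom{n-1}{k-1}d^{\,n-k}\deg X$. Hence $|S|$ attains the maximum asserted in item~2 (the nominal dependence on $\gamma$ being vacuous, as $H(x,\lambda,0)=G$). Finally, nonsingularity: grouping the target coordinates of $G$ as $(f,\{g_i\}_{i\notin I},\{g_i\}_{i\in I},g_n)$ at a point of $S$, one has $\partial g_i/\partial\lambda=0$ for $i\notin I$ and $\partial g_i/\partial x=0$ for $i\in I$, so a Laplace expansion of $\det dG$ along the $\lambda$-columns factors it, up to a nonzero scalar, as the determinant of the $k\times k$ matrix formed from the coefficient vectors of the $h_i$ ($i\in I$) and the vector $(g_{n1}(x),\dots,g_{nk}(x))$, times the determinant of the $n\times n$ matrix whose columns are those of $\Jac_f(x)$ together with the gradients $\nabla l_{ij_i}$ ($i\notin I$). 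The first is nonzero for generic $h_i$ because $(g_{n1}(x),\dots,g_{nk}(x))\neq0$ at the finitely many points involved (excluding the degenerate case in which all $g_{nj}$ vanish identically on $X$); the second is nonzero because $x\in X_{reg}$ and $f$ is reduced, so the columns of $\Jac_f(x)$ span the conormal space of $X$ at $x$, which the generic slice meets transversally. This proves item~1.

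\emph{Item 3.} The homotopy $H(x,\lambda,t)=(1-t)G+t\gamma F$ is polynomial in $(x,\lambda)$ and analytic in $t$, and by item~1 every point of $S$ is a nonsingular isolated zero at $t=0$, so the gamma trick applies: for $\gamma$ outside a finite subset of a generic circle in $\bbbc$, the fibres of $H(\cdot,\cdot,t)=0$ over $t\in[0,1)$ contain no singular zeros and no zero runs off to infinity before $t=1$, so over $[0,1)$ the zero set of $H$ is a disjoint union of smooth paths along each of which $H(\cdot,\cdot,t)$ has only nonsingular isolated zeros (see \cite{Sommese2005The}, Chapter~7, and \cite{MR2728983}). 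The $|S|$ such paths issuing from the points of $S$ are therefore trackable in the sense of the definition above.

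\emph{Main obstacle.} The only genuine work is the simultaneous bookkeeping in Items~1--2: pinning down one explicit family of generic conditions on the $l_{ij}$ and $h_i$ (together with the finitely many non-vanishing conditions on the given data $g_{nj}$ along $X$) under which, all at once, every admissible stratum contributes exactly $\deg X$ reduced points of $X_{reg}$, distinct strata produce distinct zeros, and the block Jacobian above is invertible at each of them. Once those conditions are fixed, Items~2 and~3 are routine applications of the cited general theory.
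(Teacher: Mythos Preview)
Your proof is correct and follows essentially the same strategy as the paper: decompose the solutions of $G$ via the linear-product structure (choosing $n-k$ of the $l_{ij}$ from distinct $g_i$ to cut out a witness set of $V(f)$, then solving a square linear system in $\lambda$), and appeal to coefficient--parameter continuation and the gamma trick for items~2 and~3. The paper's own proof is much terser---it cites witness-set theory for the nonsingularity in item~1 and simply invokes \cite{MORGAN1989123} for items~2 and~3---whereas you supply the explicit block-Jacobian factorisation and a direct upper-bound argument for the maximality in item~2; these additions are sound (note in particular that $(g_{n1}(x),\dots,g_{nk}(x))\neq 0$ is automatic at any solution of $g_n=0$, so your caveat there is harmless but unnecessary).
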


\begin{proof}
As for item 1, since $f$ has $k$ equations only in $x$, and $V({f_1}, \ldots, {f_k})$ is a pure $n-k$ dimension algebraic set in ${\bbbc^n}$. To solve system $G$, it needs only $n - k$ linear functions $L$ in $g$ from different ${g_i}$ with $i \in \left\{ {1, \ldots, n - 1} \right\}$ to determine $x$. $\left\{ {f,L} \right\}$ is a $n \times n$ square system, $V(f,L)$ is a finite witness set for algebraic set $V({f_1}, \ldots, {f_k})$, and each of the points is a nonsingular solution of $V(f,L)$ (see \cite{Sommese2005The} Chapter 13 for details). And, we finally determine $\lambda$ by solving a square linear equations.
As for item 2, and item 3, it's a trivial deduction of Coefficient-Parameter Continuation \cite{MORGAN1989123}.\qed
\end{proof}

\begin{remark}\label{re:3}
From the proof of Theorem \ref{th:1}, the number of points of the finite set $V(f,L)$ is the degree of $V(f)$, and is independent of the choice of $L$. Thus, based on the number of different choices of $L$, and item 2, we can give a root count bound of system $F(x,\lambda )$ as in the following theorem, which is similar to the bound in \cite{Wu2017}.
\end{remark}

\begin{theorem}\label{th:2}
For a system $F(x,\lambda ) = \left\{ {f,J \cdot \lambda  - \beta} \right\}$ as in (\ref{eq:2}). The number of complex root of this system is bounded by
\begin{equation}
 \left( {\begin{array}{*{20}{c}}
  {n - 1} \\
  {n - k}
\end{array}} \right){d^{n - k}}D
\end{equation}
\\
where $D$ is the degree of $V(f)$.
\end{theorem}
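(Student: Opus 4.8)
The plan is to count the isolated solutions of the start system $G=\{f,g\}$ constructed in Theorem~\ref{th:1}, since by items~1 and~2 of that theorem the number of complex solutions of $F$ is bounded above by $|S|=\#\{\text{isolated roots of }G\}$. So the task reduces to a purely combinatorial count of how the linear-product structure of $g$ interacts with the system $f$.

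First I would recall from the proof of Theorem~\ref{th:1} that to cut the pure $(n-k)$-dimensional set $V(f)$ down to finitely many points one selects exactly $n-k$ of the first $n-1$ equations $g_1,\dots,g_{n-1}$; the remaining $(k-1)$ of those equations together with $g_n$ then pin down $\lambda$ linearly. Each $g_i$ ($1\le i\le n-1$) factors as $l_{i1}\cdots l_{id}\,h_i$ with $l_{ij}$ generic linear forms in $x$ and $h_i$ a generic linear form in $\lambda$; since we only need the $x$-part to slice $V(f)$, each chosen $g_i$ contributes a choice of one of its $d$ linear factors $l_{ij}$ as the actual slicing hyperplane. By Remark~\ref{re:3} (and Chapter~13 of \cite{Sommese2005The}), for a fixed generic choice of $n-k$ hyperplanes the number of witness points is exactly $D=\deg V(f)$, independent of which hyperplanes are chosen. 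Hence the total count is: (number of ways to choose which $n-k$ of the $n-1$ equations $g_1,\dots,g_{n-1}$ serve as slicers) $\times$ (number of ways to pick one linear factor out of $d$ from each of those $n-k$ equations) $\times$ ($D$ witness points per slice), i.e.
\begin{equation}
\binom{n-1}{n-k}\,d^{\,n-k}\,D.
\end{equation}
I would then argue that the $\lambda$-part is unambiguous: once the $x$-coordinates of a witness point are fixed, the $k-1$ unused equations among $g_1,\dots,g_{n-1}$ (whose $h_i$ are generic linear forms in $\lambda$) together with $g_n=\sum_i\lambda_i g_{ni}-1$ form a generic $k\times k$ linear system in $\lambda$, which has a unique solution for generic data, so it introduces no extra multiplicity and no loss.

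The main obstacle I anticipate is making the word "generically" rigorous at each stage simultaneously: one must ensure that for a generic choice of the coefficients of all the $l_{ij}$, $h_i$ (and $\gamma$), (a) every one of the $\binom{n-1}{n-k}d^{n-k}$ sliced subsystems $\{f,L\}$ is a genuine witness system — i.e.\ $L$ is a generic codimension-$(n-k)$ plane in the sense required by Chapter~13 of \cite{Sommese2005The}, meeting $V(f)$ transversally in exactly $D$ smooth points — and (b) the distinct slicing choices produce solution sets that are counted without collision or cancellation, so that the maximum over coefficients in item~2 of Theorem~\ref{th:1} is actually attained. Both follow from the fact that a finite intersection of nonempty Zariski-open conditions on the coefficient space is nonempty Zariski-open, but one should state this carefully. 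A secondary point worth checking is that no isolated solution of $G$ lies "at infinity" in the $\lambda$-patch $\lambda_0=1$, or more precisely that the $h_i$ being homogeneous linear in $\lambda$ does not force solutions with all $\lambda_i=0$ that then fail $g_n=0$; genericity of the $g_{ni}$ and of the $h_i$ rules this out. Once these genericity assertions are assembled, the bound $\binom{n-1}{n-k}d^{n-k}D$ drops out immediately from the product count above.
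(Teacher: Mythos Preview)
Your proposal is correct and follows essentially the same route as the paper: the paper does not give a separate proof of Theorem~\ref{th:2} but derives it (in Remark~\ref{re:3} and again in Remark~\ref{re:4}) exactly by counting the start solutions of $G$ as $\binom{n-1}{n-k}$ choices of which $g_i$ supply the slicing hyperplanes, times $d^{n-k}$ choices of a linear factor $l_{ij}$ from each, times $D$ witness points per slice, with the $\lambda$-coordinates then fixed by a square linear system. Your write-up is in fact more careful than the paper's about the genericity bookkeeping; the only minor adjustment is that the passage from ``$|S|$ bounds the roots of $F$'' uses item~3 (trackability, hence surjectivity onto the isolated solutions of $F$ via Coefficient-Parameter Continuation) rather than items~1 and~2 alone.
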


Due to Theorem \ref{th:1}, its proof and the remarks, we can design an efficient procedure to numerically find the isolated solutions of system $F(x,\lambda ) = \left\{ {f,J \cdot \lambda  - \beta} \right\}$ in the form of (\ref{eq:2}). First, we solve a square systems $\left\{ {f,L} \right\}$, where $L$ are $n-k$ randomly generated linear functions . Then for each group of $n-k$ linear functions $L'$ chosen in $g$ from different ${g_i}$ with $i \in \left\{ {1, \ldots, n - 1} \right\}$, we construct linear homotopy from $\left\{ {f,L} \right\}$ to $\left\{ {f,L'} \right\}$, starting from points of  $V(f,L)$, and solve the square linear equation of $\lambda$ respectively. Let $S$ be the set consist of all the pairs of $x$ and $\lambda$, {\it i.e.} $(x,\lambda)$. Finally construct linear homotopy $H(x,\lambda ,t) = G \cdot (1 - t) + F \cdot \gamma  \cdot t$ starting from points in $S$, thus the endpoints of the convergent paths of homotopy $H(x,\lambda ,t)$ are isolated solutions of system $F(x,\lambda ) = \left\{ {f,J \cdot \lambda  - \beta} \right\}$. We put specific description of this procedure in the next section.

\section{Algorithm}

From Theorem \ref{th:1}, its proof and Remarks \ref{re:2} \& \ref{re:3}, we propose an approach for computing isolated solutions of system $F(x,\lambda ) $ as described in the end of last section. For consideration of the sparsity, we use the polyhedral homotopy method for solutions of the square system $\left\{ {f,L} \right\}$. Actually, we use polyhedral homotopy method only once. Now we describe our algorithms.

\begin{algorithm}[H]\label{alg:1}
\SetKwInOut{Input}{input}\SetKwInOut{Output}{output}
\Input{  $(n + k) \times (n + k)$ square polynomial system $F(x,\lambda ) = \left\{ {f,J \cdot \lambda  - \beta} \right\}$ as in (\ref{eq:2});}
\Output{  finite subset $V(F)$ of ${\bbbc^{n + k}}$ }
\SetAlgoLined
\BlankLine
 Let $L = \{ {l_1}, \ldots, {l_{n - k}}\} $ where ${l_i}$ are linear equations with randomly chosen coefficients in $\bbbc$\;
 Solve system $ \left\{ {f,l} \right\}$ by polyhedral homotopy method and denote the solution set as $M$\;
 Let $F'(x,\lambda ) = \left\{ {f,A\cdot(J\cdot\lambda  - \beta)} \right\}$, $G = \left\{ {f,g} \right\}$, $A \in G{L_n}(\bbbc)$ such that $A \cdot \beta = (0, \ldots, 0,1)$, $g = \{ {g_1}, \ldots, {g_n}\} $. ${g_i} = {l_{i1}} \cdots {l_{id }} \cdot {h_i} \in {\left\langle {{x_1}, \ldots, {x_n},1} \right\rangle ^{d }} \times \left\langle {{\lambda _1}, \ldots, {\lambda _k}} \right\rangle $ for $i = 1, \ldots, n - 1$ with coefficients randomly chosen in $\bbbc$, and ${g_n}$ is the last equation of ${A\cdot(J\cdot\lambda  - \beta)}$\;
 Let $C = \left\{ {I\left| {I = ({\alpha _1}, \ldots, {\alpha _{n - 1}}) \in {{\left\{ {0,1} \right\}}^{n - 1}},\sum\limits_{i = 1}^{n - 1} {{\alpha _i}}  = n - k} \right.} \right\}$, and $\Omega  = \emptyset $\;
 \Repeat{$C=\emptyset$}{
      Pick one vector $I = ({\alpha _1}, \ldots, {\alpha _{n-1}})$ from $C$, and $C = C\backslash I$\;
      Let $L' = \emptyset $\;
          \For{ $i$ from $1$ to $n-1$ }{
              \If {${\alpha _i} = 1$}{ pick one linear equation ${l_i}'$ from $\{ {l_{i1}}, \ldots, {l_{id}}\} $ and $L' = L' \cup \{ {l_i}'\} $.
              }
              }
Construct linear homotopy ${H_1}(x,t) = \left\{ {f,L} \right\}\cdot (1 - t) + \left\{ {f,L'} \right\} \cdot {\gamma _1} \cdot t$ starting at points in $M$. ${\gamma _1}$ is randomly chosen complex number for gamma trick. Let the set of endpoints of the tracked paths be $M'$\;
Take every point ${x^*} = (x_{_1}^*, \ldots, x_{_n}^*)$ in $M'$ into the system $G = \left\{ {f,g} \right\}$ and resolve ${\lambda ^*} = (\lambda _1^*, \ldots, \lambda _k^*)$. $\Omega  = \Omega  \cup \{ ({x^*},{\lambda ^*})\} $. \;}
Construct linear homotopy ${H_2}(x,\lambda ,t) = G \cdot (1 - t) + F \cdot {\gamma _2} \cdot t$ starting at points in $\Omega $, ${\gamma _2}$ is randomly chosen complex number for gamma trick. Let the set of convergent endpoints of the tracked paths be $ V(F)$\;
\Return{ $V(F)$}\;
\caption{\texttt{LPH} (Linear Product Homotopy)}
\end{algorithm}

\begin{remark}\label{re:4}
$\# C = \left( {\begin{array}{*{20}{c}}
  {n - 1} \\
  {n - k}
\end{array}} \right)$ , and  in Step 5, $I = ({\alpha _1}, \ldots, {\alpha _{n-1}})$ has and only has $n - k$ entries ${\alpha _i} = 1$. When ${\alpha _i} = 1$, we choose linear equation in ${g_i}$, and there are $d$ candidates $\{ {l_{i1}}, \ldots, {l_{id}}\} $ to choose. It adds up to be $\left( {\begin{array}{*{20}{c}}
  {n - 1} \\
  {n - k}
\end{array}} \right){d ^{n - k}}$ different $\left\{ {f,L'} \right\}$. Each $\left\{ {f,L'} \right\}$ has the same number $D=\deg(V(f))$ of isolated roots as $ \left\{ {f,L} \right\}$, so homotopy in Step 13 will have no path divergent. Thus we have $\left( {\begin{array}{*{20}{c}}
  {n - 1} \\
  {n - k}
\end{array}} \right){d^{n - k}}D$ points in $\Omega $, which is the root bound we mention in Theorem \ref{th:2}. It would happen that some of the homotopy paths divergent in Step 16, the method of end games for homotopy should be used \cite{Morgan1992}\cite{MORGAN198677}\cite{Huber1998}\cite{IOPORT.05995255}.
\end{remark}

\section{Real Critical Set}
In this section, we will combine the {\tt LPH} algorithm in Section 4 and methods in \cite{Wu:2013:FPR:2465506.2465954} to compute a real witness set which has at least one point on each irreducible component of a real algebraic set, and give an illustrative example.

\subsection{Critical Points on a Real Algebraic Set}
We make the following assumptions (adapted from \cite{Wu:2013:FPR:2465506.2465954}). Let $f:{\bbbc^n} \to {\bbbc^k}$ be a polynomial system, and $f = ({f_1}, \ldots, {f_k})$ in $\bbbr[{x_1}, \ldots, {x_n}]$ satisfying the so-called Full Rank Assumption:
\begin{enumerate}\label{A1}
\item ${V_\bbbr}({f_1}, \ldots, {f_i})$ has dimension $n - i$ for $i = 1, \ldots,  k$;
\item the ideal $I({f_1}, \ldots, {f_i})$ is radical for $i = 1, \ldots, k$.
\end{enumerate}

Under these assumptions, $(\nabla f_1^T, \ldots, \nabla f_i^T)$ has rank $i$ for a generic point $p \in V({f_1}, \ldots, {f_i})$ for $i = 1, \ldots, k$.

The main problem we consider is finding at least one real witness point on each real  dimensional components of ${V_\bbbr}(f)$. For this purpose, we choose $\Phi $ in Definition 3 to be a linear function with $\Phi  = x \cdot \beta + c$, where $\beta$ is a random vector in ${\bbbr^n}$, and $c$ is a random real number. Then system (\ref{eq:1}) becomes
\begin{equation}\label{eq:F}
F = \left\{ {f,\sum\limits_{i = 1}^k {{\lambda _i}\nabla {f_i}}  - \beta} \right\} = 0.
\end{equation}

It may happen that there is no critical points of $\Phi $ in some connected component of ${V_\bbbr}({f_1}, \ldots, {f_k})$. In that case, we add $\Phi $ to $f$ and construct a system with $k + 1$ equations
\begin{equation}\label{eq:f1}
 {f^{(1)}} = \left\{ {f,x \cdot \beta + c} \right\}.
\end{equation}
Then recursively, we choose another linear function $\Phi_1 $, compute the critical points of $\Phi_1 $ with respect to $V({f^{(1)}})$; and so on.

We give a concrete definition of the set of real witness points ${W_\bbbr}(f)$ we are going to compute (see \cite{Wu:2013:FPR:2465506.2465954}).

\begin{definition} Let $f:{\bbbc^n} \to {\bbbc^k}$  be a polynomial system, $k \leqslant n$, and $f = ({f_1}, \ldots, {f_k})$ in $\bbbr[{x_1}, \ldots, {x_n}]$ satisfying Full Rank Assumption. $F$ and ${f^{(1)}}$ defined as (\ref{eq:F}) and (\ref{eq:f1}). We define ${W_\bbbr}(f)$ as follows:

\begin{enumerate}
\item ${W_\bbbr}(f) = {V_\bbbr}(f)$ if $n = k$;
\item ${W_\bbbr}(f) = {V_\bbbr}(F) \cup {W_\bbbr}({f^{(1)}})$ if $k < n$.
\end{enumerate}
\end{definition}

It is obvious from the definition that we can recursively solve the square system (\ref{eq:F}), and apply plane distance critical points formulation of ${f^{(1)}}$ to finally get the set of witness points ${W_\bbbr}(f)$ which contains finitely many real points on ${V_\bbbr}(f)$, and there is at least one point on each connected component of ${V_\bbbr}(f)$. Since the formulation introduces auxiliary unknowns, it increases the size of the system and leads to computational difficulties. For example, when $n = 15$ and $k = 10$, the size of system (\ref{eq:F}) becomes $25$, which is challenging for general homotopy software. Combining the {\tt LPH} algorithm, Theorem \ref{th:1} and Theorem \ref{th:2}, we have the following algorithm and an upper bound of number of points in ${W_\bbbr}(f)$, as in \cite{Wu2017}.

\begin{algorithm}[H]\label{alg:2}
\SetKwInOut{Input}{input}\SetKwInOut{Output}{output}
\Input{a polynomial system $f = ({f_1}, \ldots, {f_k})$, $k \leqslant n$, which satisfy the full rank assumption;}
\Output{a finite subset ${W_\bbbr}(f)$ of ${\bbbr^n}$, which contains at least one point on each connected component of the real algebraic set ${V_\bbbr}(f)$}
\SetAlgoLined
\BlankLine
Let ${W_\bbbr}(f) = \emptyset $\;
\While{ $k \leqslant n$}{
 ${V_\bbbr} \leftarrow \texttt{LPH} (f,\Jac_f(x) \cdot \lambda  - \beta)$\;
 ${W_\bbbr}(f) \leftarrow {W_\bbbr}(f) \cup {V_\bbbr}$\;
 $f \leftarrow \left\{ {f,x \cdot \beta + c} \right\}$ where $n$ is a random vector in ${\bbbr^n}$, and $c$ is a random real number\;
 $k \leftarrow k + 1$\;
 }
 \caption{\texttt{RWS} (Real Witness Set)}%Recursive computing for ${W_\bbbr}(f)$}
 \Return{ ${W_\bbbr}(f)$}
\end{algorithm}
\begin{remark}
Algorithm \ref{alg:2} is essentially a recursive calling of Algorithm \ref{alg:1}.
\end{remark}

\begin{theorem}
(\cite{Wu2017} Theorem 2.1) For a system $f = ({f_1}, \ldots, {f_k})$ with $n$ variables and degrees ${d_i} = \deg ({f_i})$ for $i = 1, \ldots, k$. The number of complex root of system (\ref{eq:F}) is bounded by
\begin{equation}
 \left( {\begin{array}{*{20}{c}}
  {n - 1} \\
  {n - k}
\end{array}} \right){(d - 1)^{n - k}}D
\end{equation}
\\
where $d = \max \{ {d_1}, \ldots, {d_k}\}  > 1$ and $n > k > 0$, $D$ is the degree of the pure $n - k$ dimensional component of $V = V(f)$.
\\
Moreover, the total number of points in ${W_\bbbr}(f)$  is bounded by
\begin{equation}
\sum\limits_{j = 0}^{n - k} {\left( {\begin{array}{*{20}{c}}
  {n - 1 - j} \\
  {n - k - j}
\end{array}} \right){{(d - 1)}^{n - k - j}}D}.
\end{equation}
\end{theorem}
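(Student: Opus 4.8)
The plan is to deduce both bounds from Theorem~\ref{th:2} (equivalently, from the witness-set/linear-product analysis behind Theorem~\ref{th:1}) after recognizing system~(\ref{eq:F}) as an instance of the family~(\ref{eq:2}). For the first bound: since $\Phi=x\cdot\beta+c$ is linear, its gradient $\beta$ is a constant vector and $\sum_{i=1}^{k}\lambda_i\nabla f_i=\Jac_f(x)\cdot\lambda$, so (\ref{eq:F}) is exactly $\{f,\Jac_f(x)\cdot\lambda-\beta\}$, which has the shape of~(\ref{eq:2}) with $J=\Jac_f(x)$, $\lambda\in\bbbc^{k}$, and entries $\partial f_j/\partial x_i$ of $J$ of degree $\le d-1$. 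Under the Full Rank Assumption $V(f)$ is pure $(n-k)$-dimensional of degree $D$: every component has dimension $\ge n-k$ by Krull's height theorem, while $\rank(\nabla f_1^{T},\dots,\nabla f_k^{T})=k$ at generic points of $V(f)$ forces, via the Jacobian criterion, dimension exactly $n-k$ on each component. Hence Theorem~\ref{th:2} applies with $d$ replaced by $d-1$, giving $\binom{n-1}{n-k}(d-1)^{n-k}D$. (Even without asserting purity, the proof of Theorem~\ref{th:1} goes through on the $(n-k)$-dimensional component alone, since $n-k$ generic hyperplanes meet it in $D$ points and avoid every lower-dimensional component.)

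For $|W_\bbbr(f)|$ I would unfold the recursive definition: with $f^{(0)}=f$, $f^{(j)}=\{f^{(j-1)},\Phi_{j-1}\}$ for generic affine-linear $\Phi_{j-1}$, and $F^{(j)}$ the system~(\ref{eq:F}) attached to $f^{(j)}$, one gets $W_\bbbr(f)=\bigcup_{j=0}^{n-k-1}V_\bbbr(F^{(j)})\cup V_\bbbr(f^{(n-k)})$, hence $|W_\bbbr(f)|\le\sum_{j=0}^{n-k-1}|V_\bbbr(F^{(j)})|+|V_\bbbr(f^{(n-k)})|$. The base case $f^{(n-k)}$ has $n$ equations in $n$ variables and cuts $V(f)$ by a generic codimension-$(n-k)$ linear space, so $|V_\bbbr(f^{(n-k)})|\le D$, which is exactly the $j=n-k$ term $\binom{k-1}{0}(d-1)^{0}D=D$ of the claimed sum. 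So it remains to show $|V_\bbbr(F^{(j)})|\le\binom{n-1-j}{n-k-j}(d-1)^{n-k-j}D$ for $0\le j\le n-k-1$, which is sharper than the $\binom{n-1}{n-k-j}(d-1)^{n-k-j}D$ that a blind application of the first bound to the $(k+j)$-equation system $f^{(j)}$ would give.

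The extra saving comes from the $j$ generic linear equations of $f^{(j)}$: their gradients are $j$ constant columns appended to $\Jac_{f^{(j)}}$, which lets one build a start system for $F^{(j)}$ in which $j$ of the Lagrange-block rows are replaced by generic forms in the additional multipliers $\lambda_{k+1},\dots,\lambda_{k+j}$ alone (these generically force those multipliers to vanish), while the remaining $n-1-j$ homogeneous rows retain the linear-product shape $l_{i1}\cdots l_{i,d-1}h_i$ of Theorem~\ref{th:1} (with $l_{im}$ linear in $x$, $h_i$ linear in $\lambda_1,\dots,\lambda_k$). Selecting $n-k-j$ of the linear-in-$x$ factors to slice the pure $(n-k-j)$-dimensional set $V(f^{(j)})$ (still of degree $D$, since generic linear sections preserve degree) down to points, and then solving a square linear system for $\lambda_1,\dots,\lambda_k$, there are $\binom{n-1-j}{n-k-j}(d-1)^{n-k-j}$ admissible combinatorial choices, each contributing $D$ nonsingular isolated start solutions; the coefficient-parameter/gamma-trick argument of Theorem~\ref{th:1} then bounds $|V_\bbbr(F^{(j)})|\le|V(F^{(j)})|$ by $\binom{n-1-j}{n-k-j}(d-1)^{n-k-j}D$. (Equivalently: change coordinates so that $\Phi_0,\dots,\Phi_{j-1}$ cut out a coordinate subspace, restrict $f$ to it, and apply the first bound to the resulting $k$-equation system in $n-j$ variables.) Summing over $j=0,\dots,n-k$ produces $\sum_{j=0}^{n-k}\binom{n-1-j}{n-k-j}(d-1)^{n-k-j}D$.

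The first bound is essentially immediate once (\ref{eq:F}) is matched with (\ref{eq:2}); the main obstacle is the bookkeeping in the last step: verifying that the adapted start system's support really is contained in that of the target, that all the relevant genericity survives (of $\beta$, of the $\Phi_{j-1}$, of the reduced objective, and of the square linear solves for the multipliers), that the degree $D$ is unchanged under each generic hyperplane section, and that the count of start solutions correctly bounds the number of isolated complex solutions of $F^{(j)}$ through coefficient-parameter continuation.
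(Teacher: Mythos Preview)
The paper does not give its own proof of this statement; it is quoted verbatim from \cite{Wu2017}. So there is nothing in the present paper to compare your argument against beyond the scaffolding of Theorems~\ref{th:1} and~\ref{th:2}, which you use correctly.

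Your derivation of the first bound is exactly the intended one: system~(\ref{eq:F}) is an instance of~(\ref{eq:2}) with $J=\Jac_f(x)$, whose entries have degree at most $d-1$, so Theorem~\ref{th:2} yields $\binom{n-1}{n-k}(d-1)^{n-k}D$ immediately. For the second bound your restriction argument is the clean way to obtain the sharper binomial $\binom{n-1-j}{n-k-j}$: after a linear change of coordinates making $\Phi_0,\dots,\Phi_{j-1}$ into coordinate functions, the last $j$ Lagrange equations solve uniquely for $\lambda_{k+1},\dots,\lambda_{k+j}$, and what remains is precisely a system of the form~(\ref{eq:2}) in $n-j$ variables and $k$ multipliers, with $V(\tilde f)$ of degree $D$ by genericity of the slices; Theorem~\ref{th:2} then gives the $j$-th summand. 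Your alternative route (hand-crafting a start system for $F^{(j)}$ with $j$ rows depending only on $\lambda_{k+1},\dots,\lambda_{k+j}$) is morally the same but, as you note yourself, requires checking that the modified $G$ still has support containing that of $F^{(j)}$ after the row transformation of Remark~\ref{re:2}; the coordinate-change version bypasses that bookkeeping entirely and is preferable.
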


%\begin{proof}
 %(6) is the conclusion of theorem 2. As for (7),it is a result of recursive computing of ${W_\bbbr}({f^{(1)}})$.   \qed
%\end{proof}

Obviously we have the following inequalities: \[MV(F) \leqslant \left( {\begin{array}{*{20}{c}}
  {n - 1} \\
  {k - 1}
\end{array}} \right){(d - 1)^{n - k}}D \leqslant \left( {\begin{array}{*{20}{c}}
  {n - 1} \\
  {k - 1}
\end{array}} \right){(d - 1)^{n - k}}\prod\limits_{i = 1}^k {{d_i}}  \leqslant {d^n}\prod\limits_{i = 1}^k {{d_i}}. \]
If $f$ is dense, the equalities hold.  And if $f$ is sparse, they vary considerably most of the time. For example, let
$f = \{  - 62xy + 97y - 4xyz - 4,80x - 44xy + 71{y^2} - 17{y^3} + 2\}$ with $d = 3,n = 3,k = 2$. We have $MV(F) = 11$, $\left( {\begin{array}{*{20}{c}}
  {n - 1} \\
  {k - 1}
\end{array}} \right){(d - 1)^{n - k}}D = 28$, $\left( {\begin{array}{*{20}{c}}
  {n - 1} \\
  {k - 1}
\end{array}} \right){(d - 1)^{n - k}}\prod\limits_{i = 1}^k {{d_i}}  = 36$, and ${d^n}\prod\limits_{i = 1}^k {{d_i}}=243 $.

\subsection{Illustrative Example}
In this subsection, we present an illustrative example for Algorithm \ref{alg:2}.
\begin{example} \label{Cubic and Cubic ellipse}
 Consider the hypersurface defined by $f = ({y^2} - {x^3} - ax - b) \cdot ({(x - y + e)^3} + x + y)$, $e = 6,a =  - 4,b =  - 1$. Clearly, ${V_\bbbr}(f)$ is the combination of a cubic ellipse $({y^2} - {x^3} - ax - b)$, and a cubic curve ${(x - y + e)^3} + x + y$, as plotted in Fig. 1.
We show how to compute ${W_\bbbr}(f)$ by Algorithm \ref{alg:2}.

\begin{figure}\label{fig:1}
\begin{center}
    \includegraphics[width=7cm]{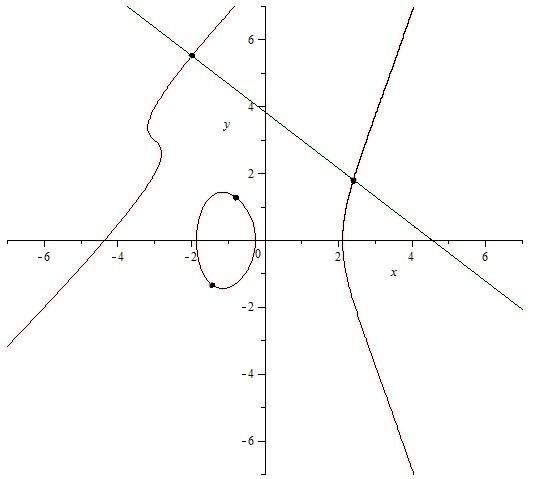}
\end{center}
  \caption[]{$n=10,k=4,\deg=2$}
\end{figure}

\begin{itemize}
\item For computing ${V_\bbbr} = \texttt{LPH}(f)$, we randomly choose a line $l$ in ${\bbbc^2}$ and solve $L = \left\{ {f,l} \right\}$ by polyhedral homotopy, which follows $D = 6$ paths. Then to compute $\Omega $ by linear homotopy, we follow $\left( {\begin{array}{*{20}{c}}
  {2 - 1} \\
  {2 - 1}
\end{array}} \right){(6 - 1)^{2 - 1}}6 = 30$ convergent paths, and for ${V_\bbbr}$ by linear homotopy, we follow 30 paths, of which 6 are convergent and 19 divergent. Then \[{V_\bbbr} = \left\{ {\begin{array}{*{20}{c}}
  {( - 1.44299, - 1.32941), }&{( - 0.781143,1.28371)}
\end{array}} \right\}\]
\item For computing ${W_\bbbr}(f)$, we solve ${f^{(1)}} = \left\{ {f,x \cdot \beta + c} \right\}$ by polyhedral homotopy, with $x \cdot \beta + c = 0.874645x + 1.0351y - 3.9825$ and \[{W_\bbbr}({f^{(1)}}) = \left\{ {\begin{array}{*{20}{c}}
  {(2.4052801,1.815026), }&{( - 1.992641,5.531208)}.
\end{array}} \right\}\]
So ${W_\bbbr}(f) = {W_\bbbr}({f^{(1)}}) \cup {V_\bbbr}$, which has at least one point in each connected component of ${V_\bbbr}(f)$ as in Fig. 1.
\end{itemize}
\end{example}

\section{Experiment Performance}
 As shown in Section 5, to compute the set ${W_\bbbr}(f)$, the key and most time consuming steps are solving the system $F = \left\{ {f,\sum\limits_{i = 1}^k {{\lambda _i}\nabla {f_i}}  - \beta} \right\}$ in Algorithm \ref{alg:1}. In this section, given $f = \left\{ {{f_1}, \ldots, {f_k}} \right\}$, we solve the square system $F = \left\{ {f,\sum\limits_{i = 1}^k {{\lambda _i}\nabla {f_i}}  - \beta} \right\}$. We compare our program \texttt{LPH} which implements Algorithm \ref{alg:1} to Hom4PS-2.0 (available at http://www.math.msu.edu/~li).
 %and Bertini-1.4 (available at http://www.nd.edu/~sommese/bertini).
 All the examples were computed on a PC with Intel Core i5 processor (2.5GHz CPU, 4 Cores and 6 GB RAM) in the Windows environment. %Programs of Bertini-1.4 were executed in Linux environment on the same machine.
 We mention that \texttt{LPH} is a program written in C++, available at http://arcnl.org/PDF/LHP.zip, and an interface of Maple is provided on this site.
\subsection{Dense Examples}

\begin{table}
\caption{Dense Examples}
\begin{minipage}{0.25\linewidth}
\begin{tabular}{|c|c|c|c|c|}
\hline
$n$ & $k$ & T1 & T2  & RAT\\
\hline
2 & 1 & 0.125s & 0.094s  & 1.32\\
\hline
3 & 1 & 0.125s & 0.109s  & 1.14\\
\hline
3 & 2 & 0.125s & 0.109s  & 1.14\\
\hline
4 & 1 & 0.125s & 0.109s  & 1.14\\
\hline
4 & 2 & 0.156s & 0.156s  & 1.00\\
\hline
4 & 3 & 0.202s & 0.265s  & 0.76\\
\hline
5 & 1 & 0.125s & 0.109s  & 1.14\\
\hline
5 & 2 & 0.187s & 0.202s & 0.93\\
\hline
5 & 3 & 0.390s & 0.655s  & 0.59\\
\hline
5 & 4 & 0.687s & 1.280s  & 0.54\\
\hline
6 & 1 & 0.140s & 0.109s & 1.28\\
\hline
6 & 2 & 0.281s & 0.328s  & 0.86\\
\hline
6 & 3 & 0.76s & 1.68s & 0.45\\
\hline
6 & 4 & 1.61s & 4.36s  & 0.37\\
\hline
6 & 5 & 1.90s & 6.59s & 0.29\\
\hline
7 & 1 & 0.14s & 0.10s  & 1.29\\
\hline
7 & 2 & 0.344s & 0.56s  & 0.61\\
\hline
7 & 3 & 1.3s & 3.82s  & 0.34\\
\hline
7 & 4 & 3.7s & 14.1s  & 0.27\\
\hline
7 & 5 & 6.318s & 27.9s  & 0.23\\
\hline
7 & 6 & 6.006s & 27.6s  & 0.21\\
\hline
8 & 1 & 0.15s & 0.15s  & 1.00\\
\hline
8 & 2 & 0.54s & 0.73s  & 0.74\\
\hline
8 & 3 & 2.29s & 7.2s  & 0.31\\
\hline
8 & 4 & 8.018s & 34.1s  & 0.23\\
\hline
8 & 5 & 17.6s & 91.2s  & 0.19\\
\hline
8 & 6 & 23.7s & 153s  & 0.15\\
\hline
8 & 7 & 19.4s & 128s  & 0.15\\
\hline
9 & 1 & 0.2s & 0.18s  & 1.08\\
\hline
9 & 2 & 0.7s & 1.2s & 0.58\\
\hline
9 & 3 & 4.1s & 13.1s  & 0.31\\
\hline
9 & 4 & 16.1s & 1m19s  & 0.20\\
\hline
9 & 5 & 46.5s & 4m29s  & 0.17\\
\hline
9 & 6 & 1m20s & 9m38s & 0.138\\
\hline
9 & 7 & 1m30s & 11m10s  & 0.135\\
\hline
9 & 8 & 59.9s & 7m52s  & 0.126\\
\hline
10 & 1 & 0.23s & 0.18s  & 1.25\\
\hline
10 & 2 & 0.98s & 1.9s  & 0.51\\
\hline
10 & 3 & 5.8s & 24.8s & 0.24\\
\hline
10 & 4 & 31.1s & 2m53s  & 0.18\\
\hline
10 & 5 & 1m46s & 11m30s  & 0.15\\
\hline
{\;10\;} & {\;6\;} & {\quad 3m41s \quad} & 29m15s  & 0.13\\
\hline
10 & 7 & 5m10s & 48m32s  & 0.107\\
\hline
10 & 8 & 4m57s & 48m27s  & 0.102\\
\hline
10 & 9 &  3m2s & \quad 30m7.553s \quad & 0.1\\
\hline
\end{tabular}
\end{minipage}
\hfill
\begin{minipage}{0.45\linewidth}
\begin{tabular}{|c|c|c|c|c|}
\hline
$n$ & $k$ & T1 & T2  & RAT\\
\hline
11 & 1 & 0.28s & 0.23s  & 1.37\\
\hline
11 & 2 & 1.20s & 2.85s  & 0.42\\
\hline
11 & 3 & 8.7s & 41.5s  & 0.209\\
\hline
11 & 4 & 51.5s & 5m25s  & 0.158 \\
\hline
11 & 5 & 3m15s & 25m30s  & 0.128 \\
\hline
11 & 6 & 9m1s & 76m32s  & 0.118 \\
\hline
11 & 7 & 16m14s & 2h.45m30s  & 0.098 \\
\hline
11 & 8 & 18m34s & 3h52m52s  & 0.079 \\
\hline
11 & 9 & 16m22s & 3h34m38s  & 0.076 \\
\hline
11 & 10 & 6m37s & overflow & $\varepsilon $ \\
\hline
12 & 1 & 0.29s & 0.218s &  1.360 \\
\hline
12 & 2 & 1.27s & 4.3s &  0.294 \\
\hline
12 & 3 & 13.5s & 1m10s &  0.191 \\
\hline
12 & 4 & 1m25s & 10m0.2s & 0.142 \\
\hline
12 & 5 & 6m17s & 53m58s  & 0.116 \\
\hline
12 & 6 & 20m1s & 3h15m10s  & 0.102\\
\hline
12 & 7 & 44m23s & 8h7m1s  & 0.091 \\
\hline
12 & 8 & 1h8m30s & 14h38m24s  & 0.0779 \\
\hline
12 & 9 & 1h8m42s & overflow & $\varepsilon $ \\
\hline
12 & 10 & 46m21s & overflow &  $\varepsilon $ \\
\hline
12 & 11 & 21m54s & overflow  & $\varepsilon $ \\
\hline
13 & 1 & 0.343s & 0.218s  & 1.573 \\
\hline
13 & 2 & 1.716s & 6.193s  & 0.277\\
\hline
13 & 3 & 18s & 1m51s  & 0.61 \\
\hline
13 & 4 & 2m15s & 18m35s  & 0.121 \\
\hline
13 & 5 & 11m10s & 1h52m27s  & 0.099\\
\hline
13 & 6 & 40m6s & 7h16m34s  & 0.092 \\
\hline
13 & 7 & 1h39m40s & 21h25m14s  & 0.078 \\
\hline
13 & 8 & 2h58m48s & overflow & $\varepsilon $ \\
\hline
13 & 9 & 3h59m32s & overflow & $\varepsilon $ \\
\hline
13 & 10 & 3h40m3s & overflow & $\varepsilon $\\
\hline
13 & 11 & 2h13m9s & overflow & $\varepsilon $ \\
\hline
13 & 12 & {\;56m48.309s\;} & overflow  & $\varepsilon $ \\
\hline
14 & 2 & 2.5s & 9.6s  & 0.264 \\
\hline
14 & 3 & 24.3s & 3m0.4s  & 0.134 \\
\hline
14 & 4 & 3m19s & 37m19s  & 0.089 \\
\hline
14 & 5 & 19m28s & 8h24m29s & 0.038 \\
\hline
14 & 6 & 1h16m20s & {\;15h58m59s\;}  & 0.079 \\
\hline
14 & 7 & 3h34m52s & overflow & $\varepsilon $ \\
\hline
14 & 8 & 7h50m34s & overflow & $\varepsilon $ \\
\hline
14 & 9 & 12h43m8s & overflow &  $\varepsilon $ \\
\hline
14 & 10 & 16h48m4s & overflow &  $\varepsilon $\\
\hline
14 & 11 & 13h9m8s & overflow &  $\varepsilon $ \\
\hline
14 & 12 & 6h29m37s & overflow &  $\varepsilon $ \\
\hline
{\;14\;} & {\;13\;} & 2h18m27s & overflow&  $\varepsilon $ \\

\hline
\end{tabular}
\end{minipage}
\end{table}

In Table 1, we provide the timings of {\tt LPH} and Hom4ps-2.0 for solving systems $F = \left\{ {f,\sum\limits_{i = 1}^k {{\lambda _i}\nabla {f_i}}  - \beta} \right\}$, where $f = ({f_1}, \ldots, {f_k})$ consists of dense polynomials of degree 2, $n = 2,...,14$ and $1 \leqslant k \leqslant n - 1$. T1 ,T2 are the the timings for {\tt LPH} and Hom4ps-2.0, respectively, and RAT is the ratio of T1 to T2. %The symbol $\infty$ means the computational time is greater than 24 hours.
`overflow' means running out of memory. When T2=overflow, we set RAT=$\varepsilon$.

It may be observed that {\tt LPH} is much faster than Hom4ps-2.0 when $k > 1$. Note also that {\tt LPH} is a little bit slower than Hom4ps-2.0 when $k=1$. The main reason is obvious. %First, the starting system can be easily obtained by {\tt LPH}. Second,
That is, the root number bound of {\tt LPH}, {\it i.e.}
$ \left( {\begin{array}{*{20}{c}}
  {n - 1} \\
  {n - k}
\end{array}} \right){(d - 1)^{n - k}}D,$
is close to the mixed volume $MV(F)$ when $F$ is dense but the computation of $MV(F)$ is very time-consuming.

%\subsection{Sparse Examples and RAT/Term}
\subsection{Sparse Examples}
\begin{table}
\caption{Sparse Examples}
\begin{tabular}{|c|c|c|c|c|c|c|c|c|c|c|}
\hline
Ex & $n$ & $k$ & $d$ & term &\#1 & \#2 & \# & T1 & T2 & RAT\\
\hline
C2 & 5 & 4 & 4-5 & 7-32 & 2*1767 & 692 & 383 & 16.6s & 19.4s & 0.85\\
\hline
M3 & 9 & 5 & 2 & 2-7 & 2*2240 & 368 & 32 & 69s & 6s & 11\\
\hline
G2 & 5 & 2 & 4 & 8-9 & 2*1080 & 17 & 15 & 8.4s & 0.2s & 31.8\\
\hline
H1 & 8 & 6 & 1-3 & 2-4 & 2*400 & 15 & 15 & 9.8s & 0.18s & 52\\
\hline
H2 & 8 & 5 & 2-4 & 3-5 & 2*12320 & 148 & 80 & 5m49s & 1.2s & 267\\
\hline
\end{tabular}
\end{table}

In Table 2, we provide the timings of {\tt LPH} and Hom4ps-2.0 on sparse examples: Czapor Geddes2, Morgenstern AS(3or), Gerdt2, Hairer1, and Hawes2 which are available at : http://www-sop.inria.fr/saga/POL/. \#1 and \#2 is the number of curves followed by {\tt LPH} and Hom4ps-2.0, respectively. \# is the number of roots of the Jacobian systems constructed from the examples. `$d$' means the minimal and maximal degree of the example. ``term'' means the minimal and maximal number of terms of the example. T1 and T2 are the timings of {\tt LPH} and Hom4ps-2.0, respectively. RAT means the ratio of T1 to T2.

Note that {\tt LPH} is much slower than Hom4ps on these sparse examples. The main reason is that {\tt LPH} pays the overhead cost for the $\Omega $ and homotopy $${H_2}(x,\lambda ,t) = G \cdot (1 - t) + F \cdot {\gamma _2} \cdot t.$$ Moreover, {\tt LPH} executes $2*\left( {\begin{array}{*{20}{c}}
  {n - 1} \\
  {k - 1}
\end{array}} \right){(d - 1)^{n - k}}D$ times of curve following, while Hom4ps does only $MV(F)$ times of curve following. $\left( {\begin{array}{*{20}{c}}
  {n - 1} \\
  {k - 1}
\end{array}} \right){(d - 1)^{n - k}}D$ is not tight for these sparse examples and much greater than $MV(F)$.
\subsection{RAT/Density}
\begin{figure}
%\vspace{8cm}
\begin{center}
    \includegraphics[width=10cm]{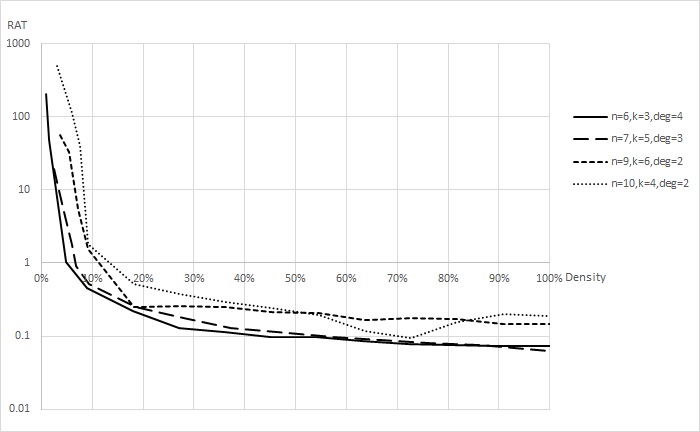}
\end{center}
  \caption[]{RAT and Density}
\end{figure}

In Fig. 2, we present the changes of ratio of T1 to T2 as terms increase. We randomly generate $f = ({f_1}, \ldots, {f_k})$ with different $n, k$ and degrees, and increase the number of terms from 2 to dense.

It can be observed that, when the polynomials are not very sparse, {\it e.g.} the number of terms are more than $10\% $ of $\left( {\begin{array}{*{20}{c}}
  {n + d} \\
  {d}
\end{array}} \right)$, {\tt LPH} is faster than Hom4ps-2.0. Actually, when the polynomials are not very sparse, the root number bound $\left( {\begin{array}{*{20}{c}}
  {n - 1} \\
  {k - 1}
\end{array}} \right){(d - 1)^{n - k}}D$ is close to $MV(F)$.
\section{Acknowledgement}

We gratefully acknowledge the very helpful suggestions of Hoon Hong on this paper with emphasize on Section 6. We also thank Changbo Chen for his helpful comments.

\bibliography{mybibfile}

\end{document}